\documentclass[11pt, oneside]{amsart}   	

\usepackage{amsmath,amssymb,amsthm}
\usepackage{psfrag,graphics,epsfig,epic,color,subfigure}
\usepackage{graphicx}      
\usepackage{epstopdf} 
\usepackage{algorithmic}
\usepackage{algorithm}
\usepackage{comment}
\usepackage{epstopdf,subfigure}
\usepackage{url}
\usepackage{booktabs}

\DeclareGraphicsRule{.tif}{png}{.png}{`convert #1 `dirname #1`/`basename #1 .tif`.png}

\newtheorem{Theorem}{Theorem}

\newtheorem{Lemma}{Lemma}

\newtheorem{Definition}{Definition}

\newcommand{\R}{\mathbb{R}}

\newcommand{\x}{\mathbf{x}}

\newcommand{\p}{\mathbf{p}}

\newcommand{\e}{\mathbf{e}}

\newcommand{\E}{\mathbb{E}}

\begin{document}

\title{APTER: Aggregated Prognosis Through Exponential Reweighting\footnotemark[2]}

\author{K.~Pelckmans, L. Yang\\
	Division of Systems and Control,\\
	Department of Information Technology, \\
	Uppsala University, Sweden}

\begin{abstract}
	This paper considers the task of learning how to make a prognosis of a patient based on his/her micro-array expression levels.
	The method is an application of the aggregation method as recently proposed in the literature on theoretical machine learning, 
	and excels in its computational convenience and capability to deal with high-dimensional data.
	A formal analysis of the method is given, yielding rates of convergence similar to what traditional techniques obtain,	
	while it is shown to cope well with an exponentially large set of features. 
	Those results are supported by numerical simulations on a range of  publicly available survival-micro-array datasets.
	It is empirically found that the proposed technique combined with a recently proposed 
	preprocessing technique gives excellent performances.
\end{abstract}

\keywords{Survival analysis, Machine Learning, Micro-array analysis}

\maketitle

\footnotetext[2]{	The used software files and datasets are available on \url{http://www.it.uu.se/research/project/survlab}}

\section{Introduction}

Learning how to make a prognosis of a patient is an important ingredient to the 
task of building an automatic system for personalised medical treatment.
A prognosis here is understood as a useful characterisation of the (future) time of an event of interest.
In cancer research, a typical event is the relapse of a patient after receiving treatment.
The traditional approach to process observed event times is addressed in the analysis of survival data, 
see e.g. \cite{kalbfleisch2011statistical} for an excellent review of this mature field in statistics.
Most of those techniques are based on parametric or semi-parametric assumptions on how the data was generated.

Probably the most prevalent technique is Cox' Proportional Hazard (PH) approach, 
where inference is made by maximising a suitable partial likelihood function.
This approach has proven to be very powerful in many applications of survival analysis, 
but it is not clear that the basic assumption underlying this technique 
holds in the analysis of micro-array datasets. 
Specifically, the proportional hazard assumption is hard to verify and might not even be valid.
This in turn jeopardises the interpretation of the results.
This is especially so since the data has typically a high dimensionality while typically a few (complete) cases are available, 
incurring problems of ill-conditioning.
Many authors suggested fixes to this problem, see for examples \cite{tibshirani2009univariate} and references.
Some of such work proposed in the early 2000, was studied numerically and compared in \cite{bovelstad2007predicting}.
In applied work, one often resorts to a proper form of preprocessing in order to use Cox' PH model, see e.g. \cite{van2002gene}.

Since prognosis involves essentially a form of prediction, it is naturally to phrase this problem in a context of modern machine learning.
This insight allowed a few authors to come up with algorithms which are deviating from a likelihood-based approach.
We mention here \cite{van2011learning} and references therein.

This work takes this route even further.
It studies the question {\em how can new insights in machine learning help to build a more powerful algorithm}?
As dictated by the application, we are especially interested in dealing with high-dimensional data. 
That is, cases where many ($O(10^4)$) covariates might potentially be relevant,
while only relatively few cases ($O(10^2)$) are available.
Furthermore, we are not so much interested in {\em recovering} the mechanisms underlying the data
since that is probably too ambitious a goal. Instead, we merely aim at making a good {\em prognosis}.
It is this rationale that makes the present technique essentially different from likelihood-based, or penalised likelihood-based approaches 
as e.g. the PH-L$_1$ \cite{goeman2010l1,simon2011regularization} or the Danzig Selector for survival analysis \cite{antoniadis2010dantzig}, 
and points us resolutely to methods of machine learning and empirical risk minimisation.

The contribution of this work is threefold.
Firstly, discussion of the application of prognosis leads us to formulate a criterion which does 
not resort to a standard approach of classification, function approximation or maximum (partial) likelihood inference.
Secondly, we point to the use of aggregation methods in a context of bio-informatics,
give a subsequent algorithm (APTER) and derive a competitive performance guarantee.
Thirdly, we present empirical evidence which supports the theoretical insights, and affirms its use for the analysis 
of micro-array data for survival analysis. 
The experiments can be reproduced using the software made public at \url{http://www.it.uu.se/research/project/survlab}.

\subsection{Organization and Notation}

This paper is organized as follows.
The next section discusses the setting of survival analyses and the aim of prognosis. 
Section 3 describes and analyses the proposed algorithm.
Section 4 gives empirical results of this algorithms on artificial and micro-array datasets.
Section 4 concludes with a number of open questions.

This paper follows the notational convention to represent deterministic single quantities as lower-case letters,
vectors are denoted in bold-face, and random quantities are represented as upper-case letters.
Expectation with respect to any random variable in the expression is denoted as $\E$.
The shorthand notation $\E_n[\cdot]$ denotes expectation with respect to all $n$ samples seen thus far,
while $\E_{n-1}[\cdot]$ denotes expectation with respect to the first $n-1$ samples.
$\E^n[\cdot]$ denotes expectation with respect to the $n$th sample only, 
such that the rules of probability imply that $\E_n[\cdot] = \E_{n-1}\E^n[\cdot]$.

The data is represented as a set of size $n$ of tuples 
\begin{equation}
	\{(\mathbf{x}_i,Y_i,\delta_i)\}_{i=1}^n,
	\label{eq.data}
\end{equation}
Let $0<Y_1\leq Y_2 \leq \dots \leq Y_n$ be an ordered sequence of observed event times associated to $n$ subjects.
An event can be either a failure with time $T_i$, or a (right) censoringtime $C_i$, expressed as the time elapse from $t_0$.
In this paper we assume that all $n$ subjects share the same time of origin $t_0$.
It will be convenient to assume that each subject has a failure and right censoring time with values $T_i$ and $C_i$ respectively.
Then only the minimum time can be observed, or $Y_i = \min(T_i,C_i)$.
It will be convenient to define the {\em past event set} 
$P(t)\subset\{1,\dots,n\}$ at time $t$.
That is, $P(t)$ denotes the set of all subjects which have experienced an {\em event} strictly before time $t$.
Let for $i=1,\dots,n$ the indicator $\delta_i\in\{0,1\}$ denote wether the event (failure) is directly observed 
($\delta_i=1$), or if the subject $i$ is censored ($\delta_i=0$), or $\delta_i = I(Y_i< C_i)$. 
Then
\begin{equation}
	P(t) = \left\{i : \ Y_i< t, \delta_i=1\right\}.
	\label{eq.R}
\end{equation}
Furthermore, associate to each subject $i=1,\dots, n$ a covariate $\x_i\in\R^d$ of dimension $d$.
In the present setting, $d=O(1000)$, while $n=O(100)$ at best.

\section{Prognosis in Survival Analysis}

In this section we formalize the task of learning how to make a prognosis, based on observed cases. 
The general task of prognosis in survival analysis can be phrased as follows:
\begin{Definition}[Prognosis]
	Given a subject with covariate $\x_\ast\in\R^d$, what can we say about the value of its associated $T_\ast$?
\end{Definition}
Motivated by the popular essay by S.J. Gould\footnote{'The Median Isn't the Message' 
as in \url{http://www.prognosis.org/what_does_it_mean.php}}, we like to make statements as 
'my covariates indicate that with high probability I will outlive 50\% of the subjects suffering the same disease',
or stated more humanely as 'my covariates indicate that I belong to the {\em good} half of the people having this disease'.
The rationale is that this problem statement appears easier to infer than estimating the full conditional hazard or conditional survival functions,
while it is more informative than single median survival rates. 

Specifically, we look for an {\em expert} $f:\R^d\rightarrow\R$ which can decide for any 2 different subjects $0<i,j\leq n$ 
which one of them will {\em fail} first.
In other words, we look for  an $f$ such that for as many couples $(i,j)$ as possible, one has 
\begin{equation}
	(T_i - T_j) \left( f(\x_i) - f(\x_j)\right) \geq 0.
	\label{eq.ci}
\end{equation}
Since $T_k$ is not observed in general due to censoring, the following (rescaled) proxy is used instead
\begin{equation}
	\sum_{i=1}^n \frac{1}{|P(Y_i)|} \sum_{j\in P(Y_i)} I\left(f(\x_i)  <  f(\x_j)\right),
	\label{eq.ci}
\end{equation}
where $I(z)=1$ if $z$ holds true, and equals zero otherwise.
In case $|P(Y_i)|=0$, the $i$th summand in the sum is omitted. 
This is standard practice in all subsequent formulae.
Note that this quantity is similar to the so called Concordance Index ($C_n$) as proposed by Harell \cite{gonen2005concordance}.
The purpose of this paper is to propose and analyze an algorithm for finding 
such $f$ from a large set $\{f\}$, based on observations and under the requirements imposed by the specific setup.

If given one expert $f:\R^d\rightarrow\R$, its 'loss'  of a prognosis of a 
subject with covariate $\x_\ast\in\R^d$ and time of event $Y_\ast$ would be 
\begin{equation}
	\ell_\ast(f) =  \frac{1}{|P(Y_\ast)|}\sum_{k\in P(Y_\ast)} I\left(f(\x_\ast) \leq f(\x_k)\right).
	\label{eq.lossi}
\end{equation}
That is, $\ell_\ast(f)$ is the fraction of samples which experience an event before the time $Y_\ast$
associated to the subject with the covariate $\x_\ast$, although they were prognozed with a higher score by expert $f$.
Now we consider having $m$ experts $\{f_i\}_{i=1}^m$, and we will learn which of them performs best.
We represent this using a vector $\p\in\R^m$ with $\p_i\geq 0$ for all $i=1,\dots,m$, and with $1_m^T\p=1$.
Then, we will use this {\em weighting} of the experts to make an informed prognosis of the event to occur at 
$T_\ast$, of a subject with covariate $\x_\ast\in\R^d$.
Its associated loss is given as 
\begin{equation}{\small
	\ell_\ast(\p) = \sum_{i=1}^m \p_i \left( \frac{1}{|P(T_\ast)|}\sum_{k\in P(T_\ast)} I\left(f_i(\x_\ast) \leq f_i(\x_k)\right) \right).
	\label{eq.lossa}
}\end{equation}
This represents basically which expert is assigned most value to for making a prognosis.
For example, in lung-cancer we may expect that an expert based on smoking behaviour of a patient has a high weight.
Note that we include the $'='$ case in (\ref{eq.lossa}) in order to avoid the trivial cases where $f$ is constant.
So, we have formalised the setting as learning such $\p$ in a way that the smallest possible loss $\ell_\ast(\p)$
will be (or can be expected to be) made \footnote{
Note that different censoring distributions will have a different impact on the of this simple accuracy measure.
However, without too much effort one can compensate for that as in \cite{koziol2009concordance} by a proper weighting scheme. 
Since this is not essential to the technique per se, we omit that to the current manuscript.}.

\section{The APTER algorithm}

When using a {\em fixed} vector $\hat{\p}$, 
we are interested in the {\em expected} loss of the rule.
The expected loss of the $n$th sample $(\x_n,T_n)$ becomes 
$\mathbb{L}(\hat\p) = \E^n\ell_n(\hat\p) =$
\begin{equation}{\tiny
	\E^n\left[ \sum_{i=1}^m \hat\p_i\frac{1}{|P(T_n)|} \sum_{k\in P(T_n)} I \left(f_i(\x_n) \leq f_i(\x_k)\right)\right].
	\label{eq.hatLn}
}\end{equation}
Note that bounds will be given for this quantity which are valid for any $\x_n\in\R^d$ which may be provided.
In order to device a method which guarantees properties of this quantity, we use the mirror averaging algorithm 
as studied in A. Tsybakov, P. Rigollet, A. Juditsky in \cite{juditsky2008learning}.
This algorithm is based on ideas set out in \cite{NY1983}.
It is a highly interesting result of those authors that the resulting estimate has better
properties in terms of oracle inequalities compared to techniques based on sample averages.
Presently, such fast rate is not obtained since the involved loss function is not exponentially concave 
as in \cite{juditsky2008learning}, Definition 4.1.
Instead of this property, we resort to use of Hoeffding's inequality which gives us 
a result with rate $O(\sqrt{\frac{\ln m}{n}})$.
In order to give a formal guarantee of the algorithm, the following property is needed:

\begin{algorithm}
\caption{APTER: Aggregate Prognosis Through Exponential Reweighting}
\label{alg.apter}
\begin{algorithmic}
{\small
\STATE (0) Let $\p_i^0 = \frac{1}{m}$ for $i=1,\dots,m$, and fix $\nu>0$.
\FORALL {$k=1,\dots,n$}
\STATE (1)
	The prognosis associated to the $m$ experts $\{f_i\}_{i=1}^m$ are scored whenever {\em any} 
	new event (censored or not) is recorded for a subject $k\in\{1,\dots, n\}$ at time $Y_k$ as 
	\begin{equation}
		\ell_k(f_i) 
		= 
		\frac{1}{|P(Y_k)|}\sum_{l\in P(Y_k)} I\left(f_i(\x_k) \leq  f_i(\x_l) \right)
		\label{eq.apter.1}
	\end{equation}
 	and the cumulative loss is $L_k(f_i) = \sum_{s=1}^k \ell_s(f_i)$.
 
\STATE (2) 
	The vector $\p^k$ is computed for $i=1,\dots,m$ as follows
	\begin{equation}
		\p_i^k
		= \frac{\exp(-\nu L_k(f_i))}{\sum_{j=1}^m \exp(-\nu L_k(f_j))}.
		\label{eq.apter.2}
	\end{equation}
\ENDFOR
\STATE (3) Aggregate the hypothesis $\{\p^k\}_k$ into $\hat\p$ as follows:
	\begin{equation}
		\hat\p = \frac{1}{n} \sum_{k=0}^{n-1} \p^k.
		\label{eq.aggregation}
	\end{equation}
}
\end{algorithmic}
\end{algorithm}

\begin{Definition}
	For any $t=1,\dots,n$ and $i=1,\dots,m$ we have that
	\begin{equation}
		\E_n\left[ g\left(\frac{L_n(f_i)}{n}\right)\right] = \E_n[g( \ell_t(f_i))].
		\label{eq.cal}
	\end{equation}
	for any regular function $g:\R\rightarrow\R$.
\end{Definition}
This essentially means that we do not expect the loss to be different when it is measured at different points in time (different subjects).
\begin{Theorem}[APTER]
	Given $m$ experts $\{f_i\}_{i=1}^m$, and the loss function $\ell$ as defined in eq. (\ref{eq.lossa}).
	Then run the APTER algorithm with $\nu = \sqrt{\frac{2\ln m}{n}}$ resulting in $\hat\p$.
	Then
	\begin{equation}
			\E_{n-1}\left[
				\mathbb{L} (\hat\p)
				- 
				\min_{i=1,\dots,m} \mathbb{L}(f_i)   
		\right]
		\leq 
		\sqrt{\frac{2\ln m}{n}}.
		\label{eq.loss2}
	\end{equation}
\end{Theorem}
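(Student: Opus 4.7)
The plan is the classical online-to-batch conversion of the exponentially weighted average forecaster (the Hedge / mirror-descent template). Three ingredients drive the argument: (i) a deterministic per-realisation regret bound on the trajectory $(\p^0,\dots,\p^{n-1})$; (ii) the fact that $\ell_k(\cdot)$ is affine in $\p$, so that the Cesàro average $\hat\p$ inherits the average of the expected losses; (iii) the calibration Definition, which is invoked to pass between the cumulative empirical loss $L_n(f_i)$ and $n\,\mathbb{L}(f_i)$ after taking expectations.

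I would start with the standard potential-function argument. Let $W_k=\sum_{i=1}^m\exp(-\nu L_k(f_i))$ with $W_0=m$. The update rule (\ref{eq.apter.2}) gives $W_k/W_{k-1}=\sum_i\p_i^{k-1}\exp(-\nu\ell_k(f_i))$; since every $\ell_k(f_i)\in[0,1]$, Hoeffding's lemma produces
\begin{equation*}
\ln\frac{W_k}{W_{k-1}}\;\leq\;-\nu\,\langle\p^{k-1},\ell_k\rangle+\frac{\nu^2}{8}.
\end{equation*}
Telescoping $k=1,\dots,n$, and using $\ln W_n\geq-\nu\min_i L_n(f_i)$ and $\ln W_0=\ln m$, one obtains the deterministic regret inequality
\begin{equation*}
\sum_{k=1}^n\langle\p^{k-1},\ell_k\rangle-\min_{i}L_n(f_i)\;\leq\;\frac{\ln m}{\nu}+\frac{n\nu}{8},
\end{equation*}
which holds for every realisation of $\{(\x_i,Y_i,\delta_i)\}_{i=1}^n$.

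Next I would convert this to a bound on $\mathbb{L}(\hat\p)$. Because $\ell_k(\p)=\sum_i\p_i\ell_k(f_i)$ is linear in $\p$, definition (\ref{eq.aggregation}) yields $n\,\mathbb{L}(\hat\p)=\sum_{k=1}^n\mathbb{L}(\p^{k-1})$. Taking $\mathbb{E}_n$ of the regret inequality and observing that $\p^{k-1}$ is measurable with respect to the first $k-1$ samples while the inner expectation $\mathbb{E}^k\ell_k(\p^{k-1})$ is handled through the calibration Definition applied coordinate-wise to the linear function $g(x)=x$, each summand $\mathbb{E}_n\ell_k(\p^{k-1})$ can be identified with $\mathbb{E}_{k-1}\mathbb{L}(\p^{k-1})$, so the left side becomes $n\,\mathbb{E}_{n-1}\mathbb{L}(\hat\p)$. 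On the right, the same calibration gives $\mathbb{E}_n L_n(f_i)=n\,\mathbb{L}(f_i)$, and swapping the minimum with the expectation costs nothing by the trivial bound $\mathbb{E}_n\min_i L_n(f_i)\leq\min_i\mathbb{E}_n L_n(f_i)$. Dividing by $n$ leaves
\begin{equation*}
\mathbb{E}_{n-1}\mathbb{L}(\hat\p)-\min_i\mathbb{L}(f_i)\;\leq\;\frac{\ln m}{n\nu}+\frac{\nu}{8},
\end{equation*}
and plugging in $\nu=\sqrt{2\ln m/n}$ gives a constant multiple of $\sqrt{\ln m/n}$ that is comfortably below the stated $\sqrt{2\ln m/n}$.

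The main obstacle is conceptual rather than computational: the losses $\ell_k(f_i)$ are not a genuinely i.i.d.\ sequence because each $\ell_k(f_i)$ depends on the entire past event set $P(Y_k)$ as well as on the $k$-th sample. A textbook online-to-batch conversion would require $\mathbb{E}^k\ell_k=\mathbb{L}$, which is not automatic here. The calibration Definition is exactly the hypothesis that sweeps this difficulty under the rug by postulating that in expectation the per-round loss at any index behaves like $L_n/n$; ensuring this step is invoked correctly—and only in the places where it is actually needed, i.e.\ the identification $\mathbb{E}_n\ell_k(\p^{k-1})=\mathbb{E}_{k-1}\mathbb{L}(\p^{k-1})$ and $\mathbb{E}_n L_n(f_i)=n\,\mathbb{L}(f_i)$—is the delicate part of an otherwise mechanical proof.
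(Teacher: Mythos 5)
Your proposal is correct and follows essentially the same route as the paper's own proof: the exponential-weights potential $W_t=\sum_i\exp(-\nu L_t(f_i))$, the lower bound on $\ln(W_n/W_0)$ via $\min_i L_n(f_i)$, Hoeffding's lemma applied to the distribution over experts, and the calibration assumption (\ref{eq.cal}) to bridge the per-round empirical losses and the risk $\mathbb{L}$. The only organisational difference is that you apply Hoeffding round by round to get a deterministic regret bound before taking expectations, whereas the paper first invokes the calibration assumption and Jensen's inequality to collapse the sum onto the aggregate $\hat\p$ and then applies Hoeffding once to the residual term $R_n$; both routes give the same $\frac{\ln m}{\nu n}+O(\nu)$ bound and hence the stated $\sqrt{2\ln m/n}$ at the prescribed $\nu$.
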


This result is in some way surprising.
It says that we can get competitive performance guarantees 
without a need for explicitly (numerically) optimizing the performance over a set of hypothesis.
Note that an optimization formulation lies on the basis of a maximum (partial) likelihood method
or a risk minimization technique as commonly employed in a machine learning setting.
There is an implicit link with optimization and aggregation through the method of mirror descent, see e.g. \cite{jakobson2000low} and \cite{bickel2006regularization}.
The lack of an explicit optimization stage results in the considerable computational speedups.
Note further that the performance guarantee degrades only as $\sqrt{\log(m)}$ in terms of the number of experts $m$.

\subsection{Choice of Experts and APTER$_p$}

The following experts are used in the application in micro-array case studies.
Here, we use simple univariate rules. 
That is, the experts are based on individual features (gene expression levels) of the dataset.
The rationale is that a single gene expression might well be indicative for the observed behaviours.

Let $\e_i$ be the $i$th unit vector, and let $\pm$ denote both the positive as well as the negated version.
Then, the experts $\{f_i\}$ are computed as 
\[ f_i(\x) = \pm \e_i^T\x, \]
so that $m=2d$, and every gene expression level can both be used for {\em over-expression} or {\em under-expression}.

In practice however, evidence is found that the following set of experts result in better performance:
\[f_i(\x) = s_i \e_i^T\x,\]
where the sign $s_i\in\{-1,1\}$ is given by wether the $i$th expression has a concordance index with the observed outcome larger or equal to $0.5$,  as estimated on the set used for training. This means that $m=d$. This technique is referred to as APTER$_p$.
Note that this subtlety needs also to be addressed in the application of Boosting methods.
There,  another popular choice is the use of random trees, see e.g. \cite{van2009feature}.

\subsection{Preprocessing using SIS and ISIS}

It is found empirically that preprocessing using the Iterative Sure Independence Screening ISIS 
as described in \cite{fan2008sure} improves the numerical results.
However, the rational for this technique comes from an entirely different angle.
That is, it is conceived as a screening technique for PH-L$_1$-type of algorithms.

The screening rule works as follows.
Let $\mathbf{m}=(\mathbf{m}_1,...\mathbf{m}_d)^T\in\R^d$ be defined as  
\begin{equation}
	\mathbf{m}=\sum_{i=1}^{n}Y_{i}\mathbf{x}_{i}.
	\label{eq.m}
\end{equation}
For any given $\gamma\in(0,1)$, define the set $M_\gamma$ as \cite{fan2008sure}:
\begin{equation}
M_{\gamma}=\left\{ 1\leq i\leq d:\left|\mathbf{m}_{i}\right|\textrm{is among the first \ensuremath{[\gamma n]} largest entries of } \mathbf{m} \right\}.
\end{equation}
Here, $[\gamma n]$ denotes the integer part of $\gamma n$. 
This set then gives the indices of the features which are retained in the further analysis. It is referred to as Sure Independence Screening (SIS) \cite{fan2008sure}.
In the second step, APTER is applied using only the retained features. 
Note that in the paper \cite{fan2008sure}, one suggests 
instead using a Cox partial Likelihood approach with a SCAD penalty
(for numerical comparison with such scheme, see the next section).

An extension of SIS is Iterative SIS (ISIS), see \cite{fan2008sure}. 
The idea is to pick up important features, missed by SIS.
This goes as follows: rather than having a single preprocessing (SIS) step, the procedure is repeated as follows.
At the end of a SIS-APTER step, a new (semi-) response vector $Y'$ can be computed by application of the found regression coefficients.
This new response variables can then be reused in a SIS step, resulting in fresh $[\gamma n]$ features.
This procedure is repeated until one has enough {\em distinct} features.

Since $[\gamma n]$ features are then given as input to the actual training procedure, 
we will refer to this value as $m$ in the experiments, making this connection between screening and training more explicit.

\section{Empirical Results}

This section present empirical results supporting the claim of efficiency.
First, we describe the setup of the experiments.

\subsection{Setup}

The following measure of quality (the Concordance index or $C_n$ or $C$-index, see e.g \cite{raykar2007ranking})
of a prognostic index scored by the function $f:\R^d\rightarrow\R$ is used.
Let again the data be denoted as $\{(\mathbf{x}_i,Y_i,\delta_i)\}_{i=1}^n$, where 
${\mathbf{x}_i}$ are the covariates, $Y_i$ contains the survival- or censoring time, and $\delta_i$ is the censoring indicator as before.
Consider any $f:\R^d\rightarrow\R$, then $C_n$ is defined as 
\begin{equation}
C_{n}(f)=\frac{\sum_{i:\delta_i=1}\sum_{Y_{j}>Y_{i}}I(f(\mathbf{x}_{i})<f(\mathbf{x}_{j}))}{\left|\varepsilon\right|}.
\end{equation}
Here $\left|\varepsilon\right|$ denotes the number of the pairs which
have $Y_{i}<Y_{j}$ when $Y_{i}$ is not censored. The indicator function
$I(\pi)=1$ if $\pi$ holds, and equals 0 otherwise.
That is, if $C_{n}(f)=1$, one has that $f$ scores a higher prognostic index to the subject with will experience the event later ('good').
A $C_{n}(f)=0.5$ says that the prognostic index given by $f$ is arbitrary with respect to event times ('bad').
Observe that this measure is not quite the same as $\ell_n(f)$ or $L_n(f)$ as were used in the design of the APTER algorithm.
Note that this function goes along the lines of the Area under the ROC curve or the Mann-Whitney statistic, adapted to handling censored data.

The data is assigned randomly to training data of size $n_t=\lfloor2n/3\rfloor$ and test data of size $n-n_t$.
The training data is used to follow the training procedures, resulting in $\hat{f}$. 
The test data is used to compute the performance expressed as $C_n(\hat{f})$.
The results are randomised 50 times (i.e. a random assignments to training and test set), 
and we report the median value as well as $\pm$ the variance.
The parameter $\nu>0$ is tuned in the experiments 
using cross-validation on the dataset which is used for training. 
It was found that proper tuning of this parameter is crucial for achieving good performance.

The following ten algorithms are run on each of these datasets:
\begin{itemize}
\item[(a)] APTER: The approach as given in Alg. \ref{alg.apter} where experts $\{f_i, f_i'\}$ are taken as $f_i(\x) = \e_i^T\x$ 
				and $f_i'(\x) = -\e_i^T\x$. In this way we can incorporate positive effects due to over-expression and under-expression
				of a gene. This means that $m=2d$.
\item[(b)] APTER$_p$: The approach as given in \ref{alg.apter} where experts $\{f_i\}$ are given as $f_i(\x) = s_i \e_i^T\x$
				where the sign $s_i\in\{-1,1\}$ is given by the $C_n$ of the $i$th expression with the observed effect, estimated on the 
				set used for training. This means that $m=d$.
\item[(c)] MINLIP$_p$:  
				The approach based on ERM and $s_i$ as discussed in \cite{van2011improved}.
\item[(d)] MODEL2:  	Another approach based on ERM as discussed in \cite{van2011improved}.
\item[(g)] PLS:		An approach based on preprocessing the data using PLS and application of Cox regression, as described in \cite{bovelstad2007predicting}.
\item[(f)] PH-L$_1$:    An approach based on a $L_1$ penalized version of Cox regression, as described in \cite{goeman2010l1}.
\item[(g)] PH-L$_2$:	An approach based on a $L_2$ penalized version of Cox regression, as described in \cite{goeman2008penalized}.
\item[(h)] ISIS-APTER$_p$:	An approach which uses ISIS as preprocessing, and applies APTER$_p$ on the resulting features \cite{fan2008sure}.
\item[(i)] ISIS-SCAD:	An approach which uses ISIS as preprocessing, and applies SCAD on the resulting features \cite{fan2008sure}.
\item[(j)] Rankboost:	An approach based on boosting the c-index \cite{freund2003efficient}.
\end{itemize}

Those algorithms are applied to an artificial dataset (as described below)
as well as on a host of real-world datasets (as can be found on the website).
Those datasets are publicly available, and all experiments can be reproduced using the code available at 
\footnote{The software is available at \url{http://www.it.uu.se/research/project/survlab}}.

\subsection{Artificial Data}

The technique is tested on artificial data which was generated as follows.
A disjunct training set and test set, both of size $100$ 'patients' was generated.
For each 'patient', $d$ features are sampled randomly from a standard distribution, so that $\x_i\in\R^d$.

We say that we have only $k$ {\em informative} features when
an event occurs at time $T_i$ computed for $i=1, \dots,n$ as 
\begin{equation}
	T_{i}=\frac{-\log Z_i}{10\exp\left(\sum_{j=1}^k \x_{i,j}\right)},
\end{equation}
where $Z_i$ is a random value generated from a uniform distribution on the unit interval $]0,1[$, 
and $\x_{i,j}$ is the $j$th covariate for the $i$th patient.
The right-censoring time is randomly generated from the exponential distribution with rate $0.1$.
After application of the censoring rule, we arrive at the {\em survival} time $Y_i$.

In a first experiment, $d$ is fixed as 100, but only the first $k\leq d$ features have an effect on the outcome ('informative').
Figure (\ref{fig:cerror}.a)  shows the evolution of the performance ($C_n(\hat{f})$) for increasing values of $k$.
In a second experiment we fix $k=10$, and record the performance for increasing values of $d$,
investigating the effect of a growing number of {\em ambient} dimension on the performance of APTER. 
Results are displayed in Figure (\ref{fig:cerror}.b). 

Thirdly, we investigate how well the numerical results align with the result of Theorem 1.
We take results of APTER using univariate experts, so that $m=2d$.
The "c-index error" ($C_{\mbox{err}}$) is given for different values of $d$ and $n$.
$C_{\mbox{err}}$ is computed as the difference between the $C_n$ obtained by APTER - denoted as $\hat{f}$ - and the $C_n$ of the single "best" expert $f_j(\x_i)=\x_{i,j}$:
\begin{equation}
	C_{\mbox{err}}=\max_j C_n(f_j) - C_n\left(\hat{f}\right).
\end{equation}
This formula is similar to equation (\ref{eq.loss2}).
The numerical performances are displayed in Figure (\ref{fig:cerror}.c).
This figure indicates that $C_{\mbox{err}}$ increases logarithmically in $d$, and in terms of $\frac{1}{\sqrt{n}}$. 
This supports the result of Theorem 1.

\begin{figure}[htbp] 
   \centering
   \subfigure[]{\includegraphics[width=2.1in]{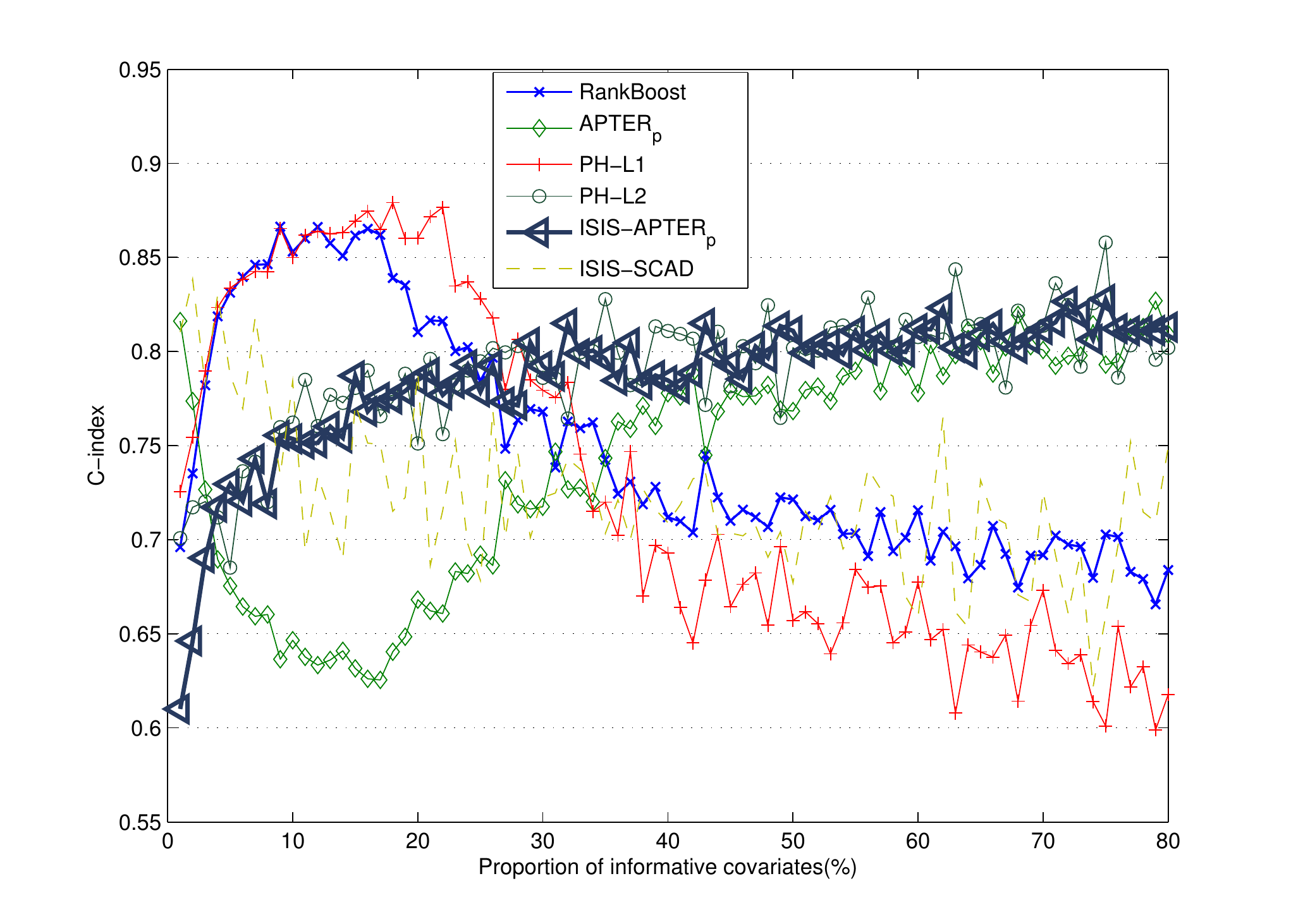}} 
   \subfigure[]{\includegraphics[width=2.1in]{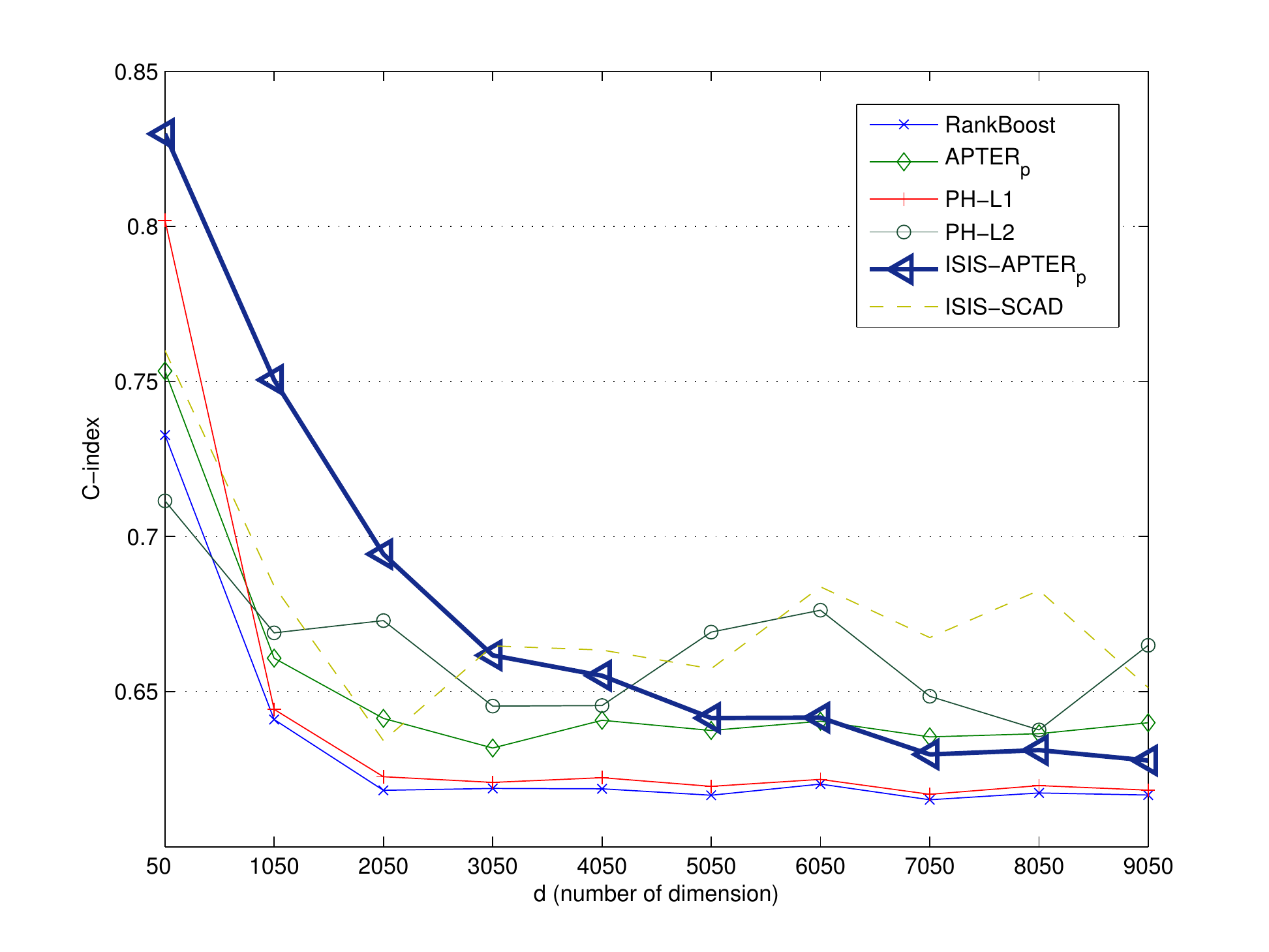}} 
      \subfigure[]{\includegraphics[width=2.1in]{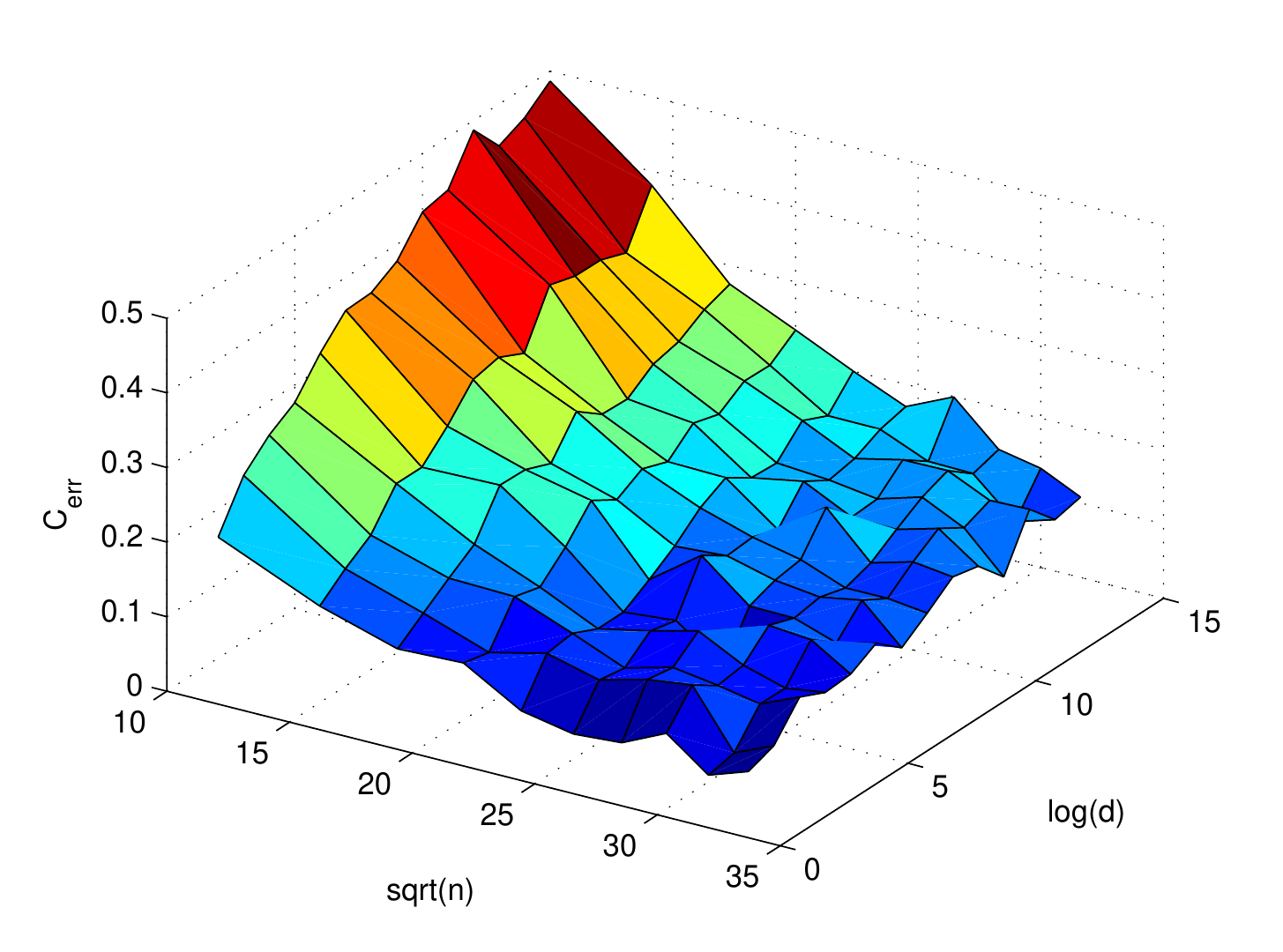}}  \\
\caption{
   	Comparison of the numerical results obtained on the artificial data sets
	(a) when keeping $d=100$ fixed, and 
	(b) when keeping $k=10$ fixed. 
   	(c) The evolution of the 'C-index error' $C_{\mbox{err}}$ obtained by APTER$_p$ for different values of $(n,m)$.
   }
   \label{fig:cerror}
\end{figure}

\subsection{Real Datasets}

In order to benchmark APTER and its variations against state-of-the-art approaches,
we run the algorithms on a wide range of large-dimensional real datasets.
The dataset are collected in a context of bioinformatics, and a full description of this data can be found on the website.
The experiments are divided into three categories:

(i) The algorithms are run on seven micro-array datasets, in order to asses performance on typical sizes for those datasets.
	Here we see that there is no clear overall winner amongst the algorithms, but the proposed algorithm (ISIS-APTER$_p$)
	does do repeatedly very well, and performs best on most datasets.
	Results are given in Table (\ref{tab:exp}).

(ii) In order to see wether the positive performance is not due to irregularities of the data,
	we consider the following {\em null} experiment. Consider the AML dataset, but lets shuffle the observed phenotypes 
	(the observed $Y$) between different subjects. So any relation between the expression level and the random phenotype must be due 
	to plain chance (by construction). We see in Figure (\ref{fig:naml},a) that indeed the distribution of the methods based on this {\em shuffled} 
	data nears a neutral $C_n$ on the test set of $0.50$. This means that the 10\% improvement as found in the real experiment (see table)
	is substantial with respect to the randomizations, and are not due to chance alone.

(iii) The results of the algorithm is compared on the micro-array dataset as reported in \cite{dave2004prediction}, and analysed further in \cite{tib2006reanalysis}.
	Here we found that the obtained performance is significantly larger than what was reported earlier, while we do not have to resort to the 
	{\em clustering} preprocessing as advocated in \cite{dave2004prediction,tib2006reanalysis}.
	This data has a very high dimensionality ($d=44.928$) and has only a few cases ($n=191$). 
	Results are given in Table (\ref{tab:fl}) and the box plots of the performances due to the 50 randomisations, are given in Figure (\ref{fig:naml}.b).

\begin{figure}[htbp] 
   \centering
   \subfigure[]{\includegraphics[width=3in]{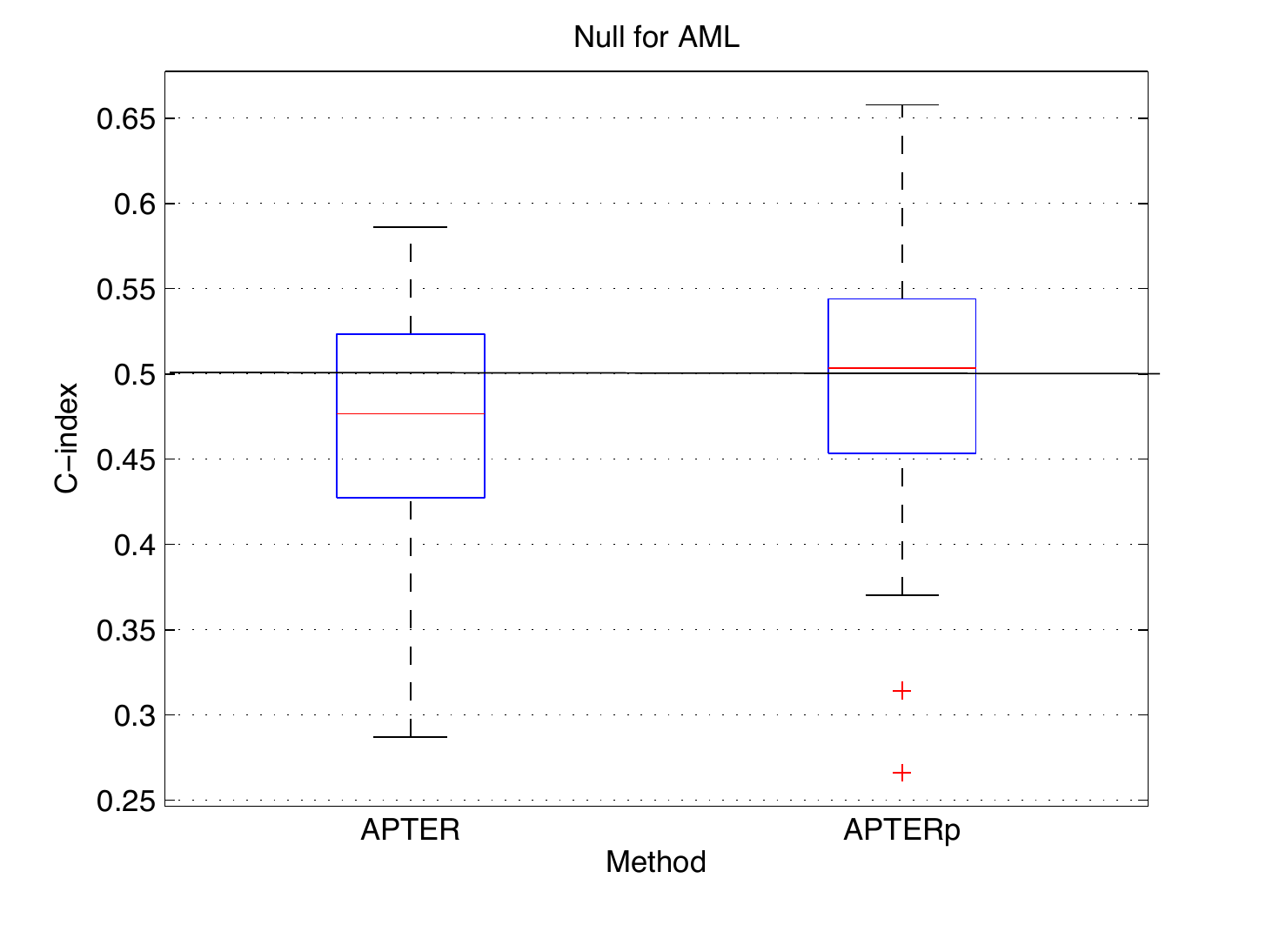}}
      \subfigure[]{\includegraphics[width=3in]{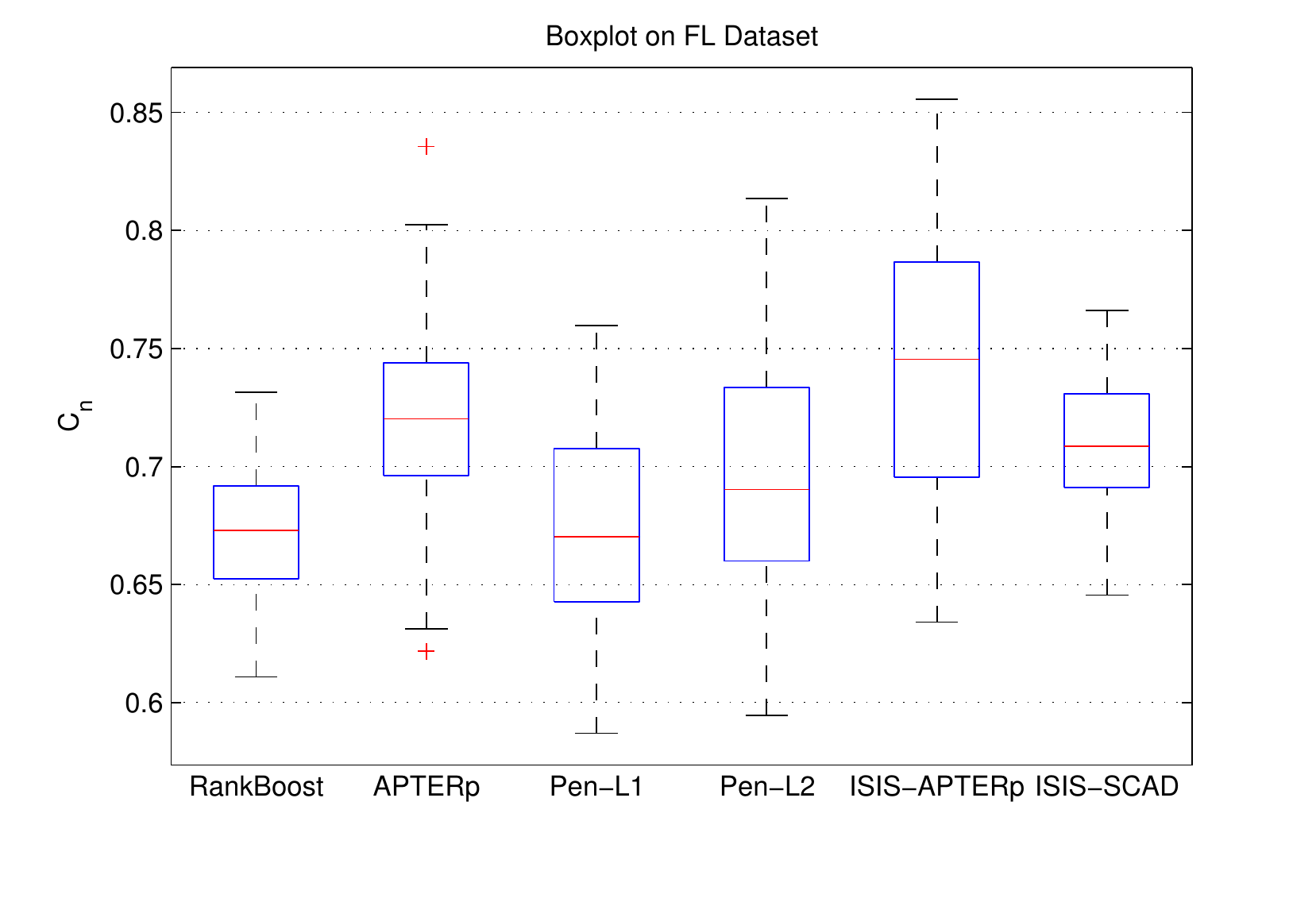}}
   \caption{	(a) Performances of APTER and APTER$_p$ on the AML dataset when the reposes are randomly shuffled.
   		(b)    	Boxplots of the numerical results obtained on the FL dataset. 
   	Results are expressed in terms of the $C_n(f)$ on a test set, where $f$ is trained and tuned on a disjunct training set.
	The boxplots are obtained using 50 randomizations of the split training-testset.}
   \label{fig:naml}
\end{figure}

Finally, we discuss the application of the method on the same high-dimensional ($d=44.928$) dataset as before,
but we study the impact of the parameter $m$ given to ISIS, which returns in turn the data to be processed by APTER$_p$.
The performances for different values of $m$ are given in Fig. (\ref{fig:fl}.a).
The best performance is achieved for $m=800$, which is the value which was used in the earlier experiment reported in Fig. (\ref{fig:naml}.b).
Here we compare only to a few other approaches, namely the PH-L$_1$, MINLIP$_p$ and MODEL2 approach which make all use of an explicit optimisation scheme.
Panel (\ref{fig:fl}.b) reports the time needed to perform training/ tuning and randomisation corresponding to a fixed value of $m$.
Panel (\ref{fig:fl}.c) reports the size of the memory used up for the same procedure.
Here it is clearly seen that  APTER$_p$ results in surprisingly good performance, given that it uses up less computations and memory.
It is even so that the optimisation-based techniques cannot finish for large $m$ in reasonable time or without problems of the memory management, 
despite the fact that a very efficient optimisation solver (Yalmip) was used to implement those.

\begin{figure}[htbp] 
   \centering
   \subfigure[]{\includegraphics[width=3.1in]{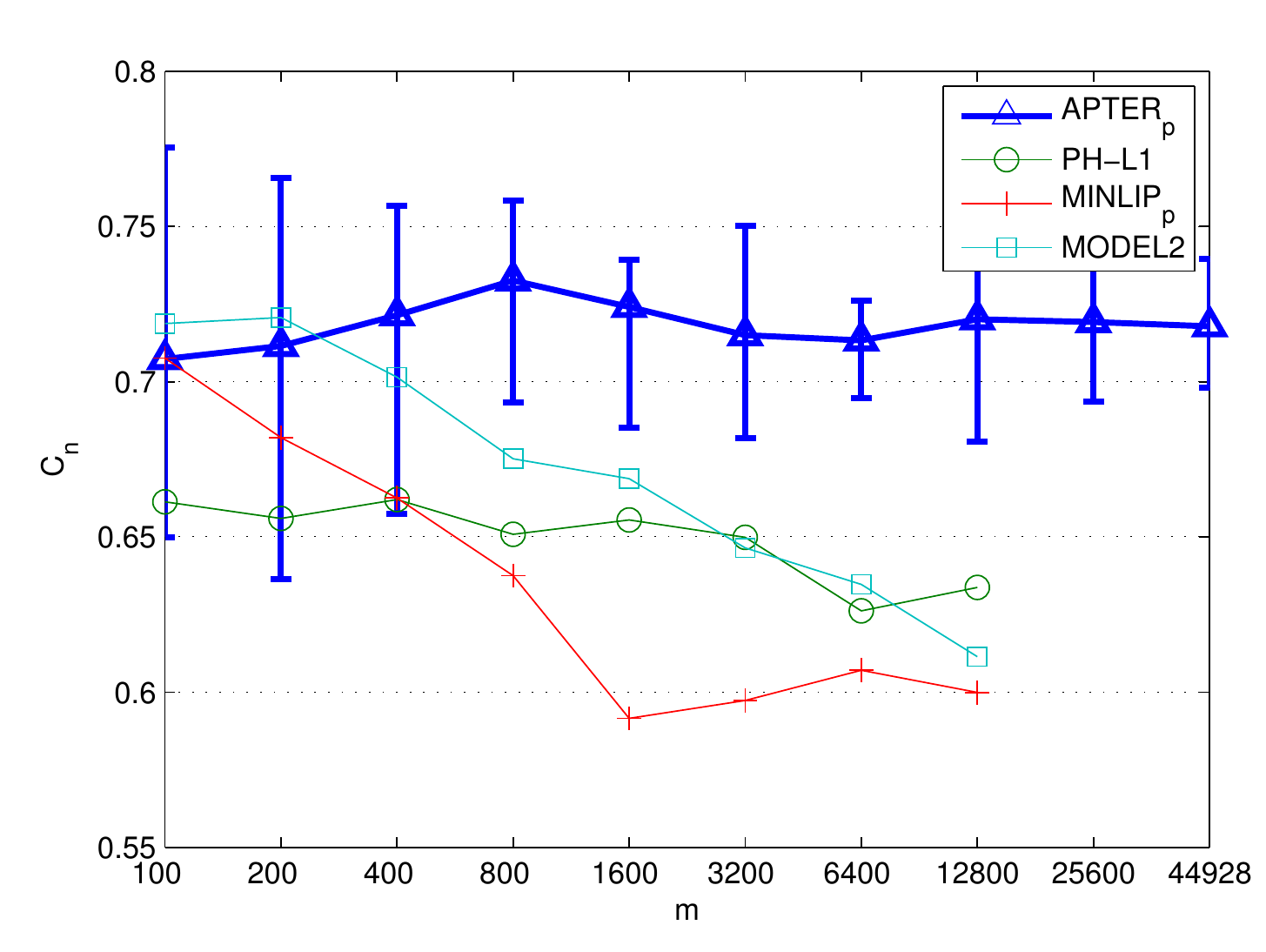}}
   \subfigure[]{\includegraphics[width=3.1in]{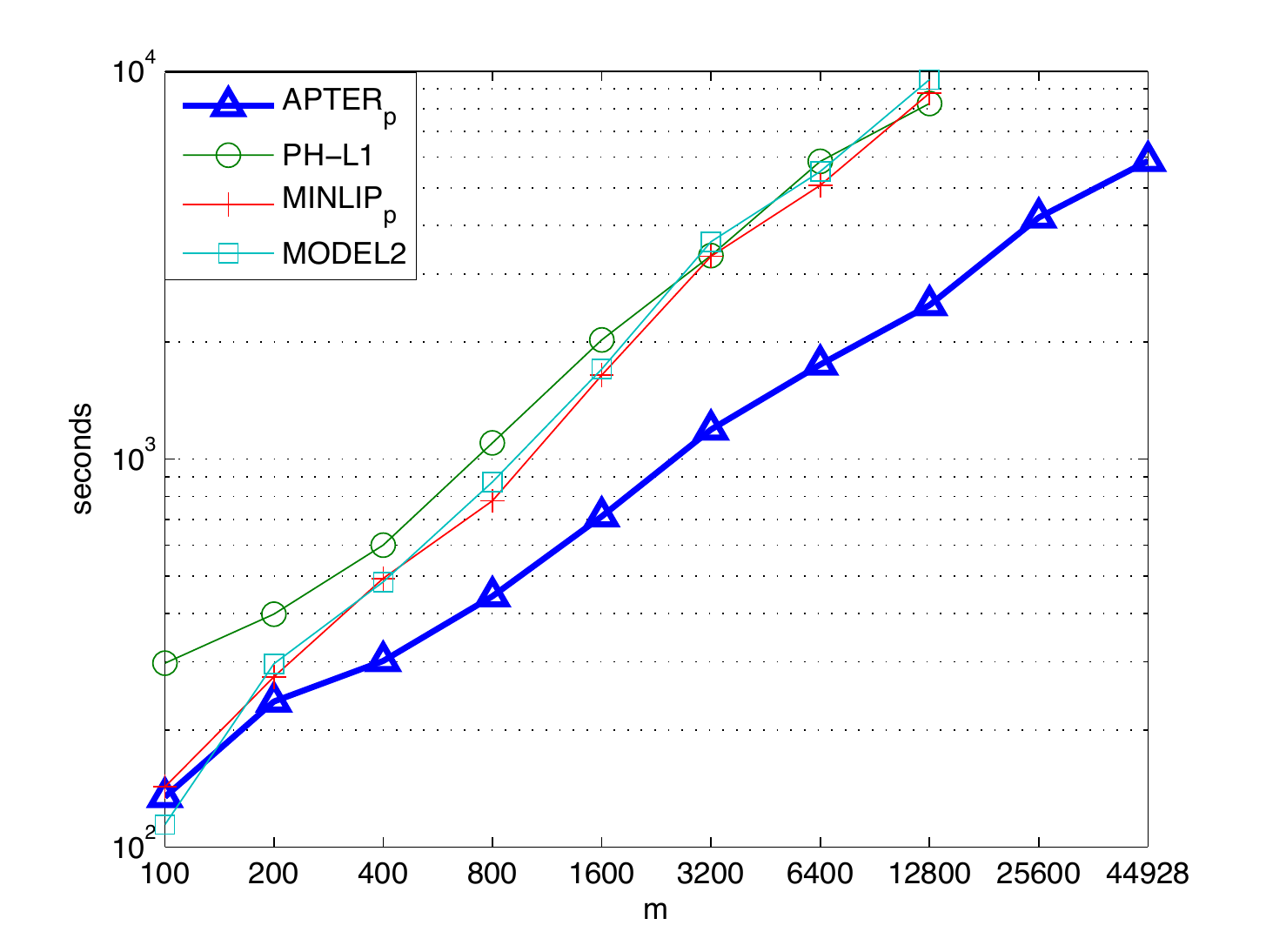}}
   \subfigure[]{\includegraphics[width=3.1in]{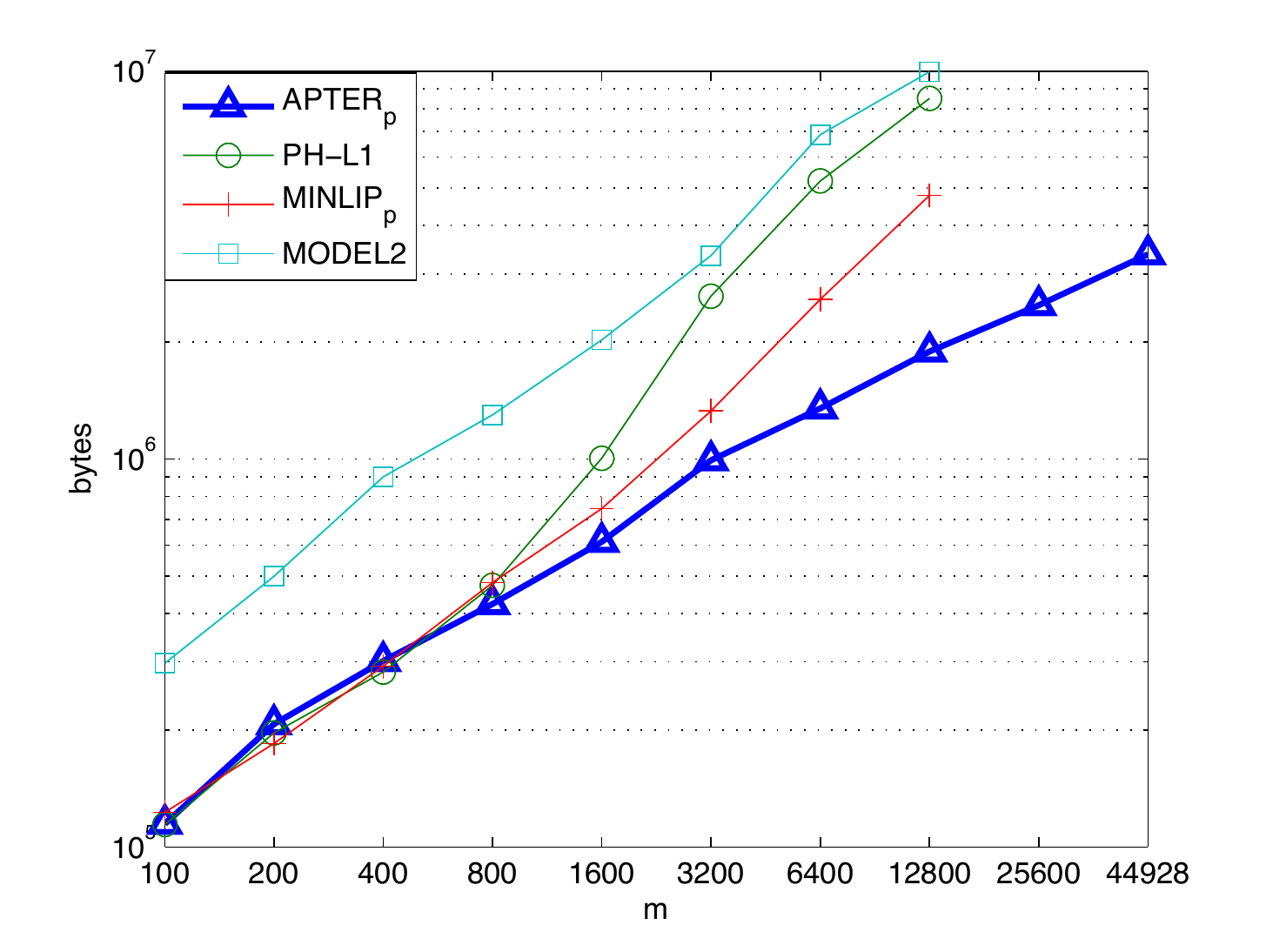}}
   \caption{results of the choice of $m$ in ISIS, based on the Follicular Lymphoma dataset\cite{dave2004prediction,tib2006reanalysis}.
   	(a) Performance expressed as $C_n(\hat{f})$ on the test sets (medium ($\pm$) of 50 randomizations).
	(b) Computation time for running tuning, training and randomisation for a fixed value of $m$.
	(c) Usage of memory of the same procedure.
   }
   \label{fig:fl}
\end{figure}

\subsection{Discussion of the Results}

These results uncover some interesting properties of the application of the proposed algorithms in this bio-informatics setting. 

First of all, the APTER and APTER$_p$ methods are orders of magnitudes faster (computationally) compared to the bulk of methods based on 
optimization formulations (either using Maximum (penalized) Partial Likelihood, Empirical Risk Minimization or multivariate preprocessing techniques).
This does not affect the performance in any way, contrary to what intuition would suggest.
In fact, the performance on typical micro-array data of the {\em vanilla} APTER or APTER$_p$ (without ISIS) is often amongst the better.

Secondly, inclusion of preprocessing with ISIS - also very attractive from a computational perspective - is boosting up significantly the 
performance of APTER. We have no theoretical explanation for this, since ISIS was designed to complement $L_1$ or Danzig-selector approaches. 
While the authors of ISIS advocate the used of a SCAD-based approach based on empirical evidence, 
we find that APTER$_p$ is overall a better choice for the mentioned datasets.
It becomes clear by looking for example to the results on the FL dataset, that ISIS per se is not causing the boost in performance.
However, the combination of ISIS and APTER$_p$ seems to perform surprisingly well.

Furthermore, the empirical results indicate that the statistical performance is preserved by using APTER$_p$ combined with ISIS, and may even improve over
performances obtained using existing approaches. This is remarkable since the computational power is orders of magnitude smaller than most existing approaches 
based on (penalised) PL of ERM. We find also that empirical results align quite closely the theoretical findings as illustrated with an experiment on artificial data.
\footnote{As is common for such case-studies, there is considerable uncertainty (variability) of the results (see e.g. the box plots in Fig. (\ref{fig:naml})).
However, since results are calculated on independent test-sets, this does not indicate overfitting. Note that this is supported by the theory indicating that the technique can deal with large sets of covariates without overfitting on the data.}


\section{Conclusions}

This paper presents statistically and computationally compelling arguments for a method based on 
aggregation can be used for analysis of survival data in high dimensions.
Theoretical findings are complemented with empirical results on micro-array datasets.
We feel that this result is surprising not only in that it outperforms 
methods in ERM or (penalised) PL, but provides as well a tool with much lower computational
complexity as the former ones since no direct optimization is involved.
We present empirical, reproducible results which support this claim of efficiency.
This analysis presents many new opportunities, both applied (towards Genome Wide Analysis, or GWAs)
as well as theoretical (can we improve the rates of convergence by choosing other loss functions?).

\appendix
\section{Proof of Theorem 1}

The following results will be used.
\begin{Lemma}[Hoeffding]
	Let $\lambda\in\R$, and let $X$ be a random variable taking values in $[a,b]\subset\R$, then
	\begin{equation}
		\ln \E \left[ \exp(\lambda X)\right] \leq \lambda \E[X] + \frac{\lambda^2(b-a)^2}{8}.
		\label{eq.hoeff}
	\end{equation}
\end{Lemma}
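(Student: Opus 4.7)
The plan is to prove Hoeffding's lemma in the standard two-step fashion: first reduce to the mean-zero case, then dominate the moment generating function by a two-point convex combination and control the resulting log-MGF by a Taylor argument. Set $Y = X - \E[X]$, which takes values in an interval of the same length $b-a$ with $\E[Y]=0$, and observe that
\begin{equation}
    \ln \E\bigl[e^{\lambda X}\bigr] - \lambda \E[X] = \ln \E\bigl[e^{\lambda Y}\bigr].
\end{equation}
Hence it suffices to prove $\ln \E[e^{\lambda Y}] \leq \lambda^2 (b-a)^2 / 8$ for a centered variable. For notational convenience I will keep calling the endpoints $a \leq 0 \leq b$ of this shifted interval.

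The second step uses convexity of $x \mapsto e^{\lambda x}$ on $[a,b]$. Writing $x = \tfrac{b-x}{b-a}\,a + \tfrac{x-a}{b-a}\,b$ gives the pointwise inequality
\begin{equation}
    e^{\lambda x} \leq \frac{b-x}{b-a}\,e^{\lambda a} + \frac{x-a}{b-a}\,e^{\lambda b}.
\end{equation}
Taking expectations and using $\E[Y]=0$ eliminates the random terms, yielding the deterministic bound
\begin{equation}
    \E[e^{\lambda Y}] \;\leq\; \frac{b}{b-a}\,e^{\lambda a} + \frac{-a}{b-a}\,e^{\lambda b}.
\end{equation}

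The last step is to bound the logarithm of the right-hand side. Set $p = -a/(b-a) \in [0,1]$ and $u = \lambda(b-a)$, and define $\phi(u) = -pu + \ln\bigl(1-p + p\,e^{u}\bigr)$, which one checks equals the log of the previous display. I would verify $\phi(0)=0$ and $\phi'(0)=0$ by direct computation, and then show
\begin{equation}
    \phi''(u) \;=\; \frac{p(1-p)\,e^{u}}{(1-p+p\,e^{u})^{2}} \;=\; t(1-t) \;\leq\; \tfrac{1}{4},
\end{equation}
where $t = p\,e^{u}/(1-p+p\,e^{u}) \in (0,1)$ and the final bound is the scalar inequality $t(1-t)\leq 1/4$. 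A second-order Taylor expansion with integral remainder then gives $\phi(u) \leq u^{2}/8$, which after unpacking $u = \lambda(b-a)$ is precisely the claim. The only mildly delicate step is establishing the uniform second-derivative bound on $\phi$, but this collapses to the one-line inequality $t(1-t)\leq 1/4$, so no serious obstacle arises; everything else is linearity of expectation and calculus.
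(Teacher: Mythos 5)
Your proof is correct: it is the canonical argument for Hoeffding's lemma (centering, the two-point convexity bound on $e^{\lambda x}$, and the second-derivative estimate $\phi''(u)=t(1-t)\leq 1/4$ followed by Taylor's theorem), and each step checks out, including the identification of the log of the convex-combination bound with $\phi(u)=-pu+\ln(1-p+pe^{u})$ under the substitution $u=\lambda(b-a)$, $p=-a/(b-a)$. Note that the paper itself states this lemma without proof, invoking it as a classical result, so there is nothing to compare against; the only cosmetic remark is that the degenerate cases $p\in\{0,1\}$ (where the centered variable is almost surely $0$) should be dispatched separately or by observing that $t(1-t)\leq 1/4$ holds on the closed interval $[0,1]$.
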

With assumption of eq. (\ref{eq.cal}) in hand, the following result holds:
\begin{Lemma}
	Given $m$ experts $\{f_i:\R_d\rightarrow\R\}_{i=1}^m$, a loss function $\ell:\R\rightarrow\R$ satisfying eq. (\ref{eq.cal}),
	and let $\{(\x_k,Y_k, \delta_k)\}_{k=1}^n$ take values in $\R^d\times \R_+\times\{0,1\}$.
	Let the APTER algorithm (\ref{alg.apter}) be run with a fixed $\nu>0$, then
	\begin{equation}
		\E_{n-1}\left[
			\mathbb{L} (\hat\p)
			- 
			\min_{i=1,\dots,m} \mathbb{L}(f_i)   
		\right]
		\leq 
		\frac{\ln m}{\nu n} + \frac{1}{\nu}\E_n [R_n],
		\label{eq.loss}
	\end{equation}
	with
	\begin{equation}
		 R_n = \frac{1}{\nu} \sum_{t=1}^n \ln \hat{E} \exp -\nu \left(\ell_n(f) - \hat{E} \ell_n(f)\right).
		\label{eq.Rn}	
	\end{equation}
	and $\hat{E} g(f) = \sum_{i=1}^m \hat\p_i g(f_i)$ for any $g$.
\end{Lemma}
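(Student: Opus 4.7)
The plan is to run a classical exponentially-weighted-average potential argument to get an online cumulative-regret bound in terms of $\sum_k \ell_k(\p^{k-1})$, and then to convert this into a batch statement about $\mathbb{L}(\hat\p)$ using the averaging step (\ref{eq.aggregation}) together with the stationarity assumption (\ref{eq.cal}). Concretely, I would introduce the log-partition potential
$$\Psi_k \;=\; \ln \sum_{i=1}^m \exp\!\bigl(-\nu L_k(f_i)\bigr), \qquad k=0,1,\dots,n,$$
so that $\Psi_0 = \ln m$ and the trivial bound $\Psi_n \geq -\nu \min_i L_n(f_i)$ holds. The update rule (\ref{eq.apter.2}) collapses the one-step increment to $\Psi_k - \Psi_{k-1} = \ln \sum_i \p_i^{k-1}\exp(-\nu\ell_k(f_i))$. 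Telescoping in $k$ and then rearranging with the algebraic identity $\nu \ell_k(\p^{k-1}) + \ln\sum_i \p_i^{k-1} e^{-\nu \ell_k(f_i)} = \ln\sum_i \p_i^{k-1} e^{-\nu(\ell_k(f_i)-\ell_k(\p^{k-1}))}$ yields the regret inequality
$$\sum_{k=1}^n \ell_k(\p^{k-1}) - \min_i L_n(f_i) \;\leq\; \frac{\ln m}{\nu} + \frac{1}{\nu}\sum_{k=1}^n \ln \E_{\p^{k-1}}\!\left[ e^{-\nu(\ell_k(f)-\ell_k(\p^{k-1}))} \right],$$
whose right-hand sum is exactly what the lemma collects into $R_n$.

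The next step is to translate both sides into batch quantities. Because $\ell_n(\cdot)$ is linear in $\p$, the averaging rule (\ref{eq.aggregation}) gives $\ell_n(\hat\p) = \frac{1}{n}\sum_{k=0}^{n-1}\ell_n(\p^k)$; since each $\p^k$ is measurable with respect to samples $1,\dots,k$, the stationarity postulate (\ref{eq.cal}) (with $g$ the identity) implies $\E_n \ell_n(\p^k) = \E_{k+1}\ell_{k+1}(\p^k)$, and summing delivers $\E_{n-1}\mathbb{L}(\hat\p) = \frac{1}{n}\E_n\!\left[\sum_{k=1}^n \ell_k(\p^{k-1})\right]$. For the minimum on the right, Jensen applied to the concave $\min$ combined with (\ref{eq.cal}) gives $\frac{1}{n}\E_n[\min_i L_n(f_i)] \leq \min_i \E_n[\ell_n(f_i)] = \min_i \mathbb{L}(f_i)$. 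Taking $\E_n$ of the regret inequality above, dividing by $n$, and inserting these two identifications produces exactly the bound claimed.

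The delicate step is the identity $\E_n \ell_n(\p^k) = \E_{k+1}\ell_{k+1}(\p^k)$ at the heart of the online-to-batch conversion: because $\ell_k$ is not a function of the $k$th sample alone but also of the past event set $P(Y_k)$, the per-sample losses are coupled across the time index, so this swap is not automatic from plain i.i.d.\ sampling --- it is exactly the role of (\ref{eq.cal}) to legitimise the move. The remaining pieces --- the telescoping of $\Psi_k$, the Jensen step on the minimum, and the linearity of $\ell_n$ in $\p$ --- are routine.
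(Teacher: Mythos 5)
Your argument follows essentially the same route as the paper: the log-partition potential $\Psi_k=\ln W_k$, the lower bound $\Psi_n-\Psi_0\geq -\nu\min_i L_n(f_i)-\ln m$, the telescoped one-step increments $\ln\sum_i\p_i^{k-1}e^{-\nu\ell_k(f_i)}$, Jensen over the averaging step (\ref{eq.aggregation}), and the stationarity postulate (\ref{eq.cal}) to re-index the losses --- and you correctly identify the re-indexing as the step that assumption is there to legitimise. The one loose end is your closing claim that the sum $\sum_{k}\ln \E_{\p^{k-1}}\bigl[e^{-\nu(\ell_k(f)-\ell_k(\p^{k-1}))}\bigr]$ ``is exactly'' the lemma's $R_n$: it is not, since the lemma's $R_n$ is expressed under the \emph{aggregated} weights $\hat{E}$ and the \emph{single} loss $\ell_n$, whereas your residual is a sum over $k$ of log-moment-generating functions under the intermediate weights $\p^{k-1}$ of the losses $\ell_k$. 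Bridging the two requires one more application of the same two tools you already use for the first-moment terms: stationarity to replace each $\ell_k$ by $\ell_n$ inside $\E_n$, and concavity of $\p\mapsto\ln\sum_i\p_i e^{-a_i}$ to pull the average $\frac{1}{n}\sum_k\p^{k-1}$ into $\hat\p$ --- this is precisely the paper's step (\ref{eq.thm2.4})--(\ref{eq.thm2.5}). In mitigation, the paper's own definition (\ref{eq.Rn}) is internally inconsistent (the prefactor $\frac{1}{\nu}$ and the sum over $t$ of a $t$-independent summand do not match the $R_n$ used in (\ref{eq.thm3.1})), and since either form of the residual is controlled by Hoeffding's lemma in the end, your version would serve the final theorem equally well; but as a proof of the lemma as literally stated, that identification needs to be spelled out rather than asserted.
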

\begin{proof}
	Consider the evolution of the normalization terms $W_t$ where
	\begin{equation}
		W_t = \sum_{i=1}^m \exp -\nu L_t(f_i),
		\label{eq.thm2.1}
	\end{equation}
	is characterized. 
	Specifically, we see that  
	\begin{equation}
		\ln\frac{W_n}{W_0} 
		= \ln \sum_{i=1}^m \exp(-\nu L_n (f_i)) - \ln m \\
		\geq -\nu \min_{i=1,\dots, m} L_n(f_i) - \ln m,
		\label{eq.thm2.2}
	\end{equation}
	as before. Hence
	\begin{multline}
		\frac{1}{n\nu}\E_n\left[ \ln\frac{W_n}{W_0} \right]
		\geq -\min_{i=1,\dots, m} \E_n\left[ \frac{1}{n}L_n(f_i)\right] - \frac{\ln m}{n\nu} \\
		\geq -\min_{i=1,\dots, m} \E_n\left[ \ell_n(f_i)\right] - \frac{\ln m}{n\nu} \\
		\geq -\min_{i=1,\dots, m} \E_{n-1} \mathbb{L}(f_i) - \frac{\ln m}{n\nu}.
		\label{eq.thm2.2b}
	\end{multline}
	On the other hand we have that
	\begin{equation}
		\ln\frac{W_t}{W_{t-1}} 
		= \ln \frac{\sum_{i=1}^m \exp(-\nu L_t(f_i))}
		{\sum_{j=1}^m \exp(-\nu L_{t-1}(f_j))} \\
		= \ln \sum_{i=1}^m \p_i^{t-1}\left( \exp -\nu \ell_t(f_i)\right).
		\label{eq.thm2.3}
	\end{equation}
	Taking expectation over the $n$ samples (denoted as $\E_n[\cdot]$) seen thus far, and summarizing over $t=1,\dots,n$ gives
	\begin{multline}
		\frac{1}{n\nu}\sum_{t=1}^n \E_n\left[ \ln W_t -  \ln W_{t-1}\right] \\
		= \frac{1}{n\nu} \sum_{t=1}^n \E_n\left[\ln \sum_{i=1}^m \p_i^{t-1} \exp -\nu \ell_t(f_i)\right] \\
		= \frac{1}{n\nu} \sum_{t=1}^n \E_n\left[\ln \sum_{i=1}^m \p_i^{t-1} \exp -\nu \frac{L_n(f_i)}{n}\right] \\
		= \frac{1}{n\nu} \sum_{t=1}^n \E_n\left[\ln \sum_{i=1}^m \p_i^{t-1} \exp -\nu \ell_n(f_i)\right] \\
		\leq \frac{1}{\nu}  \E_n\left[\ln \sum_{i=1}^m \hat\p_i \exp -\nu \ell_n(f_i)\right],
		\label{eq.thm2.4}
	\end{multline}
	where the last inequality follows from Jenssen's inequality, and from the formula of aggregation as in eq. (\ref{eq.aggregation}).
	Now, this gives
	\begin{multline}
		\frac{1}{\nu} \E_n\left[\ln \hat{E} \exp -\nu \ell_n(f)\right] \\
		= \frac{1}{\nu} \E_n \left[\ln \hat{E} \exp -\nu \hat{E}\ell_n(f)\right] \\
		+ \frac{1}{\nu} \E_n\left[\ln \hat{E} \exp -\nu \left(\ell_n(f) - \hat{E}\ell_n(f)\right)\right] \\
		= -\E_{n-1} \E^n[\hat{E}\ell_n(f)] \\
		+ \frac{1}{\nu} \E_n\left[\ln \hat{E} \exp -\nu \left(\ell_n(f) - \hat{E}\ell_n(f)\right)\right],
		\label{eq.thm2.5}
	\end{multline}
	where we defined for notational convenience $\hat{E}\x = \sum_{i=1}^m \hat\p_i \x_i$ for all $\x\in\R^m$, and 
	$\hat{E} \ell_n(f) = \sum_{i=1}^m \hat\p_i \ell_n(f_i)$.
	Combining inequalities (\ref{eq.thm2.2b}) and (\ref{eq.thm2.5}) gives
	\begin{equation}
		 \E_{n-1}\left[
		 \mathbb{L} (\hat\p) - \min_{i=1,\dots, m} \mathbb{L}(f_i) 
		 \right]
		\leq 
		\frac{\ln m}{\nu n} + \frac{1}{\nu}\E_n [R_n],
		 \label{eq.thm2.6}
	\end{equation}
	as desired. $\Box$
\end{proof}
So we are left to proof that the term $\E_n[R_n]$ is bounded in our case.
The proof of Theorem 1 is then given as follows.
\begin{proof}
	This follows by application of Hoeffding's inequality as in eq. (\ref{eq.hoeff}) since 
	\begin{equation}
		R_n = \ln \hat{E} \exp -\nu \left(\ell_n(f) - \hat{E}\ell_n(f)\right) \\
			\leq \frac{\nu^2}{2},
		\label{eq.thm3.1}
	\end{equation}
	where we use that $0\leq \ell_n\leq 1$.
	Then combining with eq.  (\ref{eq.loss}) gives the result. $\Box$
\end{proof}

\section{Benchmark datasets}

This appendix describes the real-world datasets.
The datasets range from large-dimensional ($d=O(100)$) to huge-dimensional ($d=O(10,000)$), 
and record $n=O(100)$ subjects. 
We report the performance of different methods on:
\begin{itemize}
\item 7 publicly available datasets containing micro-array expression levels and events (occurrence of disease) of the associated subjects as used in \cite{bovelstad2007predicting}.
\item The micro-array survival dataset as presented in \cite{dave2004prediction} and analysed in the report \cite{tib2006reanalysis}.
\end{itemize}
Details are given below.
The 7 publicly available micro-array datasets as used for benchmarking in \cite{bovelstad2007predicting}, have the following properties. 
\begin{itemize}
\item [(NSBCD):] 
	The Norway/Stanford Breast Cancer Data set is given in \cite{sorlie2003repeated}.
	In this database there are survival data of $n=115$ women who have breast cancer, 
	and $d=549$ intrinsic genes introduced in \cite{sorlie2003repeated} were measured.
	In the $115$ patients, 33\% (38) have experienced an event during the study. 
	Missing values were imputed by the 10-nearest neighbour method.

\item[(Veer):] 
	The survival data of sporadic lymph-node-negative patients with their gene expression profiles is given in \cite{van2002gene}.
	It has $n=78$ patients with $d=4751$ gene expressions selected from the 25,000 genes recorded with the micro-array.
	44 patients remained free of disease after their diagnosis for an interval of at least 5 years. 
	The average follow-up time for these patients was 8.7 years.
	34 patients had developed distant metastases within 5 years, and the mean time to metastases was 2.5 years.

\item[(Vijver):] 
	The data set of $n=295$ consecutive patients with primary breast carcinomas is from \cite{van2002gene}
	All patients had stage I or II breast cancer and were younger than 53 years old.
	They gave the previously determined $d=70$ marker genes that are associated with 
	the risk of early distant metastases in young patients with lymph-node-negative breast cancer.
	The median follow-up among all 295 patients was 6.7 years (range, 0.05 to 18.3). There were no missing data.
	88 patients have experienced an event during the study.

\item[(DBCD):] 
	The Dutch Breast Cancer Data set is described in \cite{van2006cross}, 
	and is a subset of the data from \cite{van2002gene}.
	There are survival data of $n=295$ women who have breast cancer.
	The measures of $d=4919$ gene expression were taken from the fresh-frozen-tissue bank of the Netherlands Cancer Institute.
	All the ages of the patients are smaller than or equal to 52 years. 
	The diagnosis was made between 1984 and 1995 without previous history of cancer. 
	The median of follow-up time was 6.7 years (range 0.05-18.3).
	In the 295 patients, 26.78\% (79) have experienced an event during the study.
	
\item[(DLBCL):] 
	The diffuse large-B-cell lymphoma data set is described in \cite{rosenwald2002use}.
	This contains survival data of $n=240$ patients who have diffuse large-B-cell lymphoma.
	$d=7399$ different gene expression measurements are given.
	The median of follow-up time was 2.8 years.
	From the 240 patients, 58\% have experienced an event during the study.

\item[(Beer):] 
	The survival data of $n=86$ patients with primary lung adenocarcinomas is from \cite{beer2002gene}
	There are $d=7129$ expressed genes selected from Affymetrix hu6800 micro-arrays.
	76 patients have experienced an event during the study.

\item[(AML):] 
	The survival data of acute myeloid leukemia patients is described in \cite{bullinger2004use}.
	It contains $n=116$ patients with acute myeloid leukemia and the expression levels of $d=6283$ genes.
	71 patients have experienced an event during the study.
\end{itemize}
The same datasets were used in \cite{bovelstad2007predicting} and \cite{van2011learning} 
to benchmark state-of-art methods, results that are reproduced here as well.
\begin{itemize}
\item[(FL):]
	Additionally, we use the micro-array dataset which was used in \cite{dave2004prediction}, and analysed in \cite{tib2006reanalysis}.
	This dataset included the survival data of $n=191$ patients with follicular lymphomas, where $t_0$ equals the respective time of diagnosis.
	The median age at diagnosis  was 51 years (range, 23 to 81), and the median follow-up time was 6.6 years (range, less than 1.0 to 28.2).
	The median followup time among the patients alive at the final follow-up was 8.1 years.
	It contains $d=44928$ gene expression levels selected from Affymetrix U133A and U133B micro-arrays.
\end{itemize}


\newpage

\begin{table}
\begin{tabular}{|c|c|c|c|c|c|c|c|}
\hline 
 & NSBCD & DBCD & DLBCD & Veer & Vijver & Beer & AML \tabularnewline
 & $(115\times 549)$& $(295\times 4919)$& $(240\times 7399)$& $(78\times 4751)$& $(295\times 70)$& $(86\times 7129)$& $(116\times 6283)$ \tabularnewline
\hline 
\hline 
APTER & 0.73$\pm$0.10 & 0.69$\pm$0.06 & 0.58$\pm$0.04 & 0.65$\pm$0.10 & 0.44$\pm$0.06 & 0.60$\pm$0.13 & 0.58$\pm$0.05\tabularnewline
\hline 
APTER$_p$ & 0.77$\pm$0.05 & 0.74$\pm$0.04 & 0.59$\pm$0.03 & \textbf{0.68$\pm$0.08} & 0.62$\pm$0.04 & 0.73$\pm$0.08 & 0.60$\pm$0.05\tabularnewline
\hline 
MINLIP$_p$ & 0.74$\pm$0.05 & 0.71$\pm$0.04 & 0.59$\pm$0.04 & 0.65$\pm$0.10 & 0.61$\pm$0.06 & 0.69$\pm$0.09 & 0.55$\pm$0.07\tabularnewline
\hline 
MODEL2 & 0.75$\pm$0.04 & 0.74$\pm$0.04 & 0.62$\pm$0.03 & \textbf{0.67$\pm$0.09} & 0.61$\pm$0.06 & \textbf{0.74$\pm$0.08} & 0.56$\pm$0.06\tabularnewline
\hline 
PLS & \textbf{0.78$\pm$0.05}  & 0.74$\pm$0.03 & 0.53$\pm$0.05 & 0.58$\pm$0.10 & 0.62$\pm$0.07 & 0.66$\pm$0.12 & 0.57$\pm$0.06\tabularnewline
\hline 
PH-L2 & 0.69$\pm$0.07 & 0.73$\pm$0.04 & \textbf{0.65$\pm$0.04} & 0.64$\pm$0.08 & 0.61$\pm$0.08 & 0.73$\pm$0.08 & 0.54$\pm$0.06\tabularnewline
\hline 
PH-L1 & 0.69$\pm$0.06 & 0.74$\pm$0.04 & 0.60$\pm$0.04 & 0.60$\pm$0.06 & \textbf{0.65$\pm$0.06} & 0.69$\pm$0.02 & 0.61$\pm$0.06\tabularnewline
\hline 
Rankboost & 0.75$\pm$0.04 & 0.72$\pm$0.03 & 0.62$\pm$0.02 & 0.62$\pm$0.02 & \textbf{0.65$\pm$0.02} & 0.71$\pm$0.02 & 0.53$\pm$0.01\tabularnewline
\hline 
ISIS-SCAD & 0.69$\pm$0.04 & 0.72$\pm$0.04 & \textbf{0.65$\pm$0.07} & \textbf{0.68$\pm$0.04} & 0.62$\pm$0.02 & 0.72$\pm$0.04 & \textbf{0.63$\pm$0.02}\tabularnewline
\hline 
ISIS-APTER$_p$ & \textbf{0.78$\pm$0.06} & \textbf{0.76$\pm$0.08} & 0.62$\pm$0.07 & 0.66$\pm$0.05 & 0.62$\pm$0.06 & \textbf{0.75$\pm$0.09} & 0.59$\pm$0.05\tabularnewline
\hline 
\end{tabular}
  \caption{This table reports the $C_n(\hat{f})$ as computed on an independent test set (median $\pm$ variance on 50 randomisations)
  	of the experiments of 10 different methods on 7 micro-array datasets.}
  \label{tab:exp}
\end{table}

\begin{table}{
  \begin{tabular}{|c|c|c|} 
     \toprule
     Dataset & Method & $C_n(\hat{f})$  \\
     \midrule
FL & APTER 			& 0.70$\pm$0.05 	\\
$(191\times 44928)$ 	& APTER$_p$ 	& 0.73$\pm$0.04  \\
					& MINLIP$_p$ 	& 0.70$\pm$0.03 	 \\
					& MODEL2  	& 0.72$\pm$0.04 	 \\
					& PLS 		& 0.66$\pm$0.03 	\\
					& PH-L$_2$ 	& 0.69$\pm$0.07 	\\
					& PH-L$_1$ & 0.67$\pm$0.05 		 \\
					& RankBoost &0.67$\pm$0.03 	  \\
					& ISIS-SCAD & 0.71$\pm$0.03 	 \\
					& ISIS-APTER$_p$ & {\bf 0.74$\pm$0.05} 	 \\
					& Dave's Method (see \cite{tib2006reanalysis}) & 0.71$\pm$0.02 	 \\
     \bottomrule
  \end{tabular}}
  \caption{Numerical results of the experiments of 10 different methods on the Follicular Lymphoma dataset.}
  \label{tab:fl}
\end{table}


\end{document}